\DeclareMathOperator{\sgn}{sgn}
\DeclareMathOperator{\sat}{sat}
\newtheorem{theorem}{Theorem}
\theoremstyle{remark}
\newtheorem{remark}{Remark}
\newtheorem{assumption}{Assumption}
\title{An intelligent controller for underactuated mechanical systems}
\author{
Josiane Maria de Macedo Fernande, Marcelo Costa Tanaka,\\
Wallace Moreira Bessa, Edwin Kreuzer
}
\date{}
\begin{document}

\maketitle

\abstract{This paper presents an intelligent controller for uncertain underactuated nonlinear systems. The adopted approach 
is based on sliding mode control and enhanced by an artificial neural network to cope with modeling inaccuracies and external 
disturbances that can arise. The sliding surfaces are defined as a linear combination of both actuated and unactuated variables. 
A radial basis function is added to compensate the performance drop when, in order to avoid the chattering phenomenon, the sign
function is substituted by a saturation function in the conventional sliding mode controller. An application of the proposed 
scheme is introduced for an inverted pendulum, in order to illustrate the controller design method, and numerical results are 
presented to demonstrate the improved performance of the resulting intelligent controller.}

\section*{INTRODUCTION}

A mechanical system could be defined as underactuated if it has more degrees of freedom to be controlled than 
independent control inputs/actuators. Underactuated mechanical systems (UMS) play an essential role in several 
branches of industrial activity and their application scope ranges from robotic manipulators \citep{lai2009,%
xin2013} and overhead cranes \citep{rapp2012,sun2013} to aerospace vehicles \citep{consolini2010,tsiotras2000} 
and watercrafts \citep{do2009,serrano2014}. 

Basically, underactuation could arise due to the following main reasons \citep{seifried2013a}:

\begin{itemize}
\item Design issues, as for instance in the case of ships, overhead cranes, and helicopters;
\item Non-rigid body dynamics, for example if one or more flexible links are considered within a robotic manipulator;
\item Actuator failure, as is the case with aerial and underwater vehicles.
\end{itemize}

Despite this broad spectrum of applications, the problem of designing accurate controllers for underactuated systems 
is unfortunately much more tricky than for fully actuated ones. Moreover, the dynamic behavior of an UMS is frequently 
uncertain and highly nonlinear, which in fact makes the design of control schemes for such systems a challenge for 
conventional and well established methodologies.

Therefore, much effort has been made in order to improve both set-point regulation and trajectory tracking of underactuated 
mechanical systems. The most common strategies are feedback linearization \citep{seifried2013a,spong1994}, feedforward control 
by model inversion \citep{seifried2012a,seifried2012b}, adaptive approaches \citep{pucci2015,nguyen2015} sliding mode control 
\citep{ashrafiuon2008,xu2008,sankaranarayanan2009,qian2009,muske2010}, backstepping \citep{chen2012,xu2013,rudra2014}, 
controlled Lagrangians \citep{bloch2000,bloch2001} and passivity-based methods \citep{ortega2002,gomez-estern2004,ryalat2016}. 
However, it should be highlighted that the control of uncertain UMS remains hard to be accomplished, specially if a high 
level of uncertainty is involved \citep{liu2013}.

Intelligent control, on the other hand, has proven to be a very attractive approach to cope with uncertain nonlinear systems \citep{tese,rsba2010,cobem2005,nd2012,Bessa2014,Bessa2017,Bessa2018,Bessa2019,Deodato2019,Lima2018,Lima2020,Lima2021,Tanaka2013}.
By combining nonlinear control techniques, such as feedback linearization
or sliding modes, with adaptive intelligent algorithms, for example fuzzy logic or artificial neural networks, 
the resulting intelligent control strategies can deal with the nonlinear characteristics as well as with modeling 
imprecisions and external disturbances that can arise.

Due to the adaptive capabilities of artificial neural networks (ANN), it has been largely employed in the last decades to both 
control and identification of dynamical systems. In spite of the simplicity of this heuristic approach, in some situations 
a more rigorous mathematical treatment of the problem is required. Recently, much effort has been made to combine neural 
networks with nonlinear control methodology. 

On this basis, sliding mode control is an appealing technique because of its robustness against both structured and 
unstructured uncertainties as well as external disturbances. Nevertheless, the discontinuities in the control law must be 
smoothed out to avoid the undesirable chattering effects. The adoption of properly designed boundary layers have proven 
effective in completely eliminating chattering, however, leading to an inferior tracking performance, as demonstrated by 
\citet{ijac2009}. 

In this context, considering that artificial neural networks can perform universal approximation \citep{hornik1,park2}, 
\citet{ideal2012} showed that these intelligent algorithms can be successfully applied for uncertainty and disturbance 
compensation within the boundary layer of smooth sliding mode controllers. 

In this work, a sliding mode controller with a neural network compensation scheme is proposed for uncertain underactuated 
mechanical systems. On this basis, a smooth sliding mode controller is considered to confer robustness against modeling 
imprecisions and a Radial Basis Function (RBF) neural network is embedded in the boundary layer to cope with unmodeled
dynamical effects. Numerical simulations are carried out in order to demonstrate the control system performance.

\section*{INTELLIGENT CONTROL USING ARTIFICIAL NEURAL NETWORKS}

The equations of motion of a mechanical system with $n$ degrees of freedom (DOF) and $m$ actuator inputs are usually 
expressed in the following vector form \citep{seifried2013a}:

\begin{equation}
\mathbf{M(q)\ddot{q}+k(q,\dot{q})=g(q,\dot{q})+B(q)u}\:,
\label{eq:mov1}
\end{equation}

\noindent
where $\mathbf{q}\in\mathbb{R}^n$ is the vector of generalized coordinates, $\mathbf{u}\in\mathbb{R}^m$ the actuator input 
vector, $\mathbf{M(q)}\in\mathbb{R}^{n\times n}$ the positive definite and symmetric inertia matrix, $\mathbf{k(q,\dot{q})}
\in\mathbb{R}^n$ takes the Coriolis and centrifugal effects into account, $\mathbf{g(q,\dot{q})}\in\mathbb{R}^n$ represents 
the generalized applied forces, and $\mathbf{B(q)}\in\mathbb{R}^{n\times m}$ is the input matrix.

\begin{remark}
The mechanical system described in Eq.~(\ref{eq:mov1}) is called fully-actuated if $m=\mathrm{rank}(B)=n$, or underactuated 
if $m<n$.
\label{rem:rank}
\end{remark}

Considering an UMS, the vector of generalized coordinates can be partitioned as $\mathbf{q}=[\mathbf{q}_a^\mathrm{T}\mbox{ }
\mathbf{q}_u^\mathrm{T}]^\mathrm{T}$, where $\mathbf{q}_a\in\mathbb{R}^n$ and $\mathbf{q}_u\in\mathbb{R}^{n-m}$ denote, 
respectively, actuated and unactuated coordinates. In these cases, the input matrix is also conveniently assumed to be 
$\mathbf{B(q)}=[\mathbf{B}_a\mbox{ }\mathbf{B}_u]^\mathrm{T}=[\mathbf{I}\mbox{ }\mathbf{0}]^\mathrm{T}$, where $\mathbf{I}
\in\mathbb{R}^{n\times n}$ is the identity matrix. 

Therefore, for control purposes, Eq.~(\ref{eq:mov1}) may be rewritten as \citep{ashrafiuon2008,seifried2013b}:

\begin{equation}
\left[ \begin{array}{cc}
\mathbf{M}_{aa}			&\mathbf{M}_{au}\\
\mathbf{M}_{au}^\mathrm{T}	&\mathbf{M}_{uu} \end{array} \right]
\left[ \begin{array}{c} \mathbf{\ddot{q}}_a\\ \mathbf{\ddot{q}}_u \end{array} \right]=
\left[ \begin{array}{c} \mathbf{f}_a+\mathbf{u}\\ \mathbf{f}_u \end{array} \right]\:,
\label{eq:mov2}
\end{equation}

\noindent
where $\mathbf{f}_a=\mathbf{g}_a-\mathbf{k}_a$ and $\mathbf{f}_u=\mathbf{g}_u-\mathbf{k}_u$.

As described in \citep{ashrafiuon2008}, Eq.~(\ref{eq:mov2}) can be solved for the accelerations:

\begin{equation}
\mathbf{\ddot{q}}_a=\mathbf{M}_{aa}^{'-1}(\mathbf{f}_a^{'}+\mathbf{u})\:,
\label{eq:accel1}
\end{equation}
\begin{equation}
\mathbf{\ddot{q}}_u=\mathbf{M}_{uu}^{'-1}(\mathbf{f}_u^{'}-\mathbf{M}_{au}^\mathrm{T}\mathbf{M}_{aa}^{-1}\mathbf{u})\:,
\label{eq:accel2}
\end{equation}

\noindent
where $\mathbf{M}_{aa}^{'}=\mathbf{M}_{aa}-\mathbf{M}_{au}\mathbf{M}_{uu}^{-1}\mathbf{M}_{au}^\mathrm{T}$,
$\mathbf{M}_{uu}^{'}=\mathbf{M}_{uu}-\mathbf{M}_{au}^\mathrm{T}\mathbf{M}_{aa}^{-1}\mathbf{M}_{au}$,
$\mathbf{f}_a^{'}=\mathbf{f}_a-\mathbf{M}_{au}\mathbf{M}_{uu}^{-1}\mathbf{f}_u$, and
$\mathbf{f}_u^{'}=\mathbf{f}_u-\mathbf{M}_{au}^\mathrm{T}\mathbf{M}_{aa}^{-1}\mathbf{f}_a$.\vspace*{6pt}

The proposed control problem has to ensure that, even in the presence of external disturbances and modeling imprecisions, 
the vector of generalized coordinates $\mathbf{q}$ will follow a desired trajectory $\mathbf{q}^d$ in the state space. 
Hence, defining $\mathbf{\tilde{q}}=\mathbf{q}-\mathbf{q}^d$ as the tracking error vector, both trajectory tracking and
set-point regulation could be stated as $\mathbf{\tilde{q}\to0}$ as $t\to\infty$.

Consider, as for instance, the sliding mode approach, and let $m$ sliding surfaces be defined in the state space by 
$\mathbf{s(\tilde{q})=0}$, with $\mathbf{s}\in\mathbb{R}^m$ satisfying

\begin{equation}
\mathbf{s(\tilde{q})}=\alpha_a\mathbf{\dot{\tilde{q}}}_a+\lambda_a\mathbf{\tilde{q}}_a+\alpha_u\mathbf{\dot{\tilde{q}}}_u
+\lambda_u\mathbf{\tilde{q}}_u=\alpha_a\mathbf{\dot{q}}_a+\alpha_u\mathbf{\dot{q}}_u+\mathbf{s}_r\:,
\label{eq:surf1}
\end{equation}

\noindent 
where $\mathbf{s}_r=-\alpha_a\mathbf{\dot{q}}_a^d+\alpha_u\mathbf{\dot{q}}_u^d+\lambda_a\mathbf{\tilde{q}}_a+\lambda_u
\mathbf{\tilde{q}}_u$.\vspace*{6pt}

Thus, the sliding mode controller must satisfy the following Lyapunov candidate function

\begin{equation}
V(\mathbf{q,\dot{q}})=\frac{1}{2}\mathbf{s^\mathrm{T}s}
\label{eq:lyap1}
\end{equation}

\noindent
and the so called sliding condition $\dot{V}(\mathbf{q,\dot{q}})\le0$.\vspace*{6pt}

On this basis, the control law is defined as \citep{ashrafiuon2008}:

\begin{equation}
\mathbf{u}=-\mathbf{\hat{M}}_s^{-1}\left[\mathbf{\hat{f}}_s^{-1}+\mathbf{\dot{s}}_r+\kappa\sgn(\mathbf{s})\right]\:,
\label{eq:smc}
\end{equation}

\noindent
where $\mathbf{\hat{M}}_s$ and $\mathbf{\hat{f}}_s$ are, respectively, estimates of $\mathbf{M}_s=\alpha_a\mathbf{M}_{aa}^{'-1}
-\alpha_u\mathbf{M}_{uu}^{'-1}\mathbf{M}_{au}^\mathrm{T}\mathbf{M}_{aa}^{-1}$ and $\mathbf{f}_s=\alpha_a\mathbf{M}_{aa}^{'-1}
\mathbf{f}_a^{'}-\alpha_u\mathbf{M}_{uu}^{'-1}\mathbf{f}_u^{'\mathrm{T}}$. The control gain $\kappa$ should be
designed in order to take unmodeled dynamics, parametric uncertainties and external disturbances into account. \vspace*{6pt}

Therefore, regarding the development of the control law, the following assumptions should be made:

\begin{assumption}
The vector $\mathbf{f}_s$ is unknown but bounded, i.e. $|\mathbf{\hat{f}}_s-\mathbf{f}_s|=|\mathbf{d}_s|\le \mathbf{F}_s$.
\label{as:limf}
\end{assumption}
\begin{assumption}
The vector of generalized coordinates $\mathbf{q}$ is available.
\label{as:coord}
\end{assumption}
\begin{assumption}
The desired trajectory $\mathbf{q}^d$ is once differentiable with respect to time. Furthermore, every component of 
$\mathbf{q}^d$ is available and with known bounds.
\label{as:traj}
\end{assumption}

However, the presence of a discontinuous term, $\kappa\sgn(\mathbf{s})$, in the control law leads to the well 
known chattering effect. To avoid these undesirable high-frequency oscillations of the controlled variable, a thin boundary 
layer in the neighborhood of the switching surface could be defined by replacing the sign function with a smooth approximation. 
This substitution can minimize or, when desired, even completely eliminate chattering, but turns {\em perfect tracking} into 
a {\em tracking with guaranteed precision} problem, which actually means that a steady-state error will always remain 
\citep{ijac2009}. In order to improve the tracking performance, some adaptive intelligent algorithm could be used within 
the boundary layer of smooth sliding mode controllers.

At this point, since artificial neural networks can be considered as an universal approximator \citep{hornik1,park2}, we
propose the adoption of an ANN based compensation term, $\mathbf{\hat{d}}$, within the smoothed version of the control 
law presented in Eq.~(\ref{eq:smc}):

\begin{equation}
\mathbf{u}=-\mathbf{\hat{M}}_s^{-1}\big[\mathbf{\hat{f}}_s+\mathbf{\hat{d}}+\mathbf{\dot{s}}_r
+\mathbf{\kappa}\sat(\mathbf{\phi}^{-1}\mathbf{s})\big],
\label{eq:issmc}
\end{equation}

\noindent
where $\mathbf{\phi}\in\mathbb{R}^{m\times m}$ is a diagonal matrix with $m$ positive entries $\phi_i$, and $\sat(\cdot)$
is the saturation function:

\begin{equation}
\mbox{sat}(s_i/\phi_i) = \left\{\begin{array}{cl}
\mbox{sgn}(s_i)&\mbox{if}\quad |s_i/\phi_i|\ge1, \\
s_i/\phi_i&\mbox{if}\quad |s_i/\phi_i|<1\:. 
\end{array}\right.
\label{eq:sat}
\end{equation}

Due to its simplicity and fast convergence feature, radial basis functions (RBF) are used as activation functions and the related 
tracking error as input. In this case, the output of the network is defined as: 

\begin{equation}
\hat{d}(\mathbf{\tilde{x}})=\sum^M_{i=1}w_i\cdot\varphi_i(\Vert\mathbf{\tilde{x}-t}\Vert)
\label{eq:rbf}
\end{equation}

\noindent
where $\varphi_i(\cdot)$ are the activation functions and $\mathbf{t}$ a vector containing the coordinates of the center of each 
activation function.

Now, the switching vector $\mathbf{s}$ is used to train the neural network and the weights of the output layer are adjusted using 
the pseudo-inverse matrix. 

Considering a training set $T=\{(\mathbf{\tilde{x}},d)_1,(\mathbf{\tilde{x}},d)_2,\ldots,(\mathbf{\tilde{x}},d)_p\}$ and

\begin{equation}
\left[\begin{array}{cccc}
\varphi_{11} &\varphi_{12} &\cdots &\varphi_{1M}\\
\varphi_{21} &\varphi_{22} &\cdots &\varphi_{2M}\\
\vdots       &\vdots       &\ddots &\vdots      \\
\varphi_{p2} &\varphi_{p2} &\cdots &\varphi_{pM}
\end{array} \right]
\left[\begin{array}{c}
w_1\\ w_2\\ \vdots\\ w_M
\end{array} \right]
=\left[\begin{array}{c}
d_1\\ d_2\\ \vdots\\ d_p
\end{array} \right]
\quad\therefore\quad
[\varphi]\{w\}=\{d\}
\label{eq:train}
\end{equation}

\noindent
the RBF weights are computed with the pseudo-inverse $[\varphi]^+$

\begin{equation}
\{w\}=[\varphi]^+\{d\}
\label{eq:weights}
\end{equation}

\noindent
and approximation error, $E$, by the euclidean norm

\begin{equation}
E=\big\Vert\{d\}-[\varphi]\{w\}\big\Vert.
\label{eq:error}
\end{equation}

\begin{remark}
Considering that radial basis functions can perform universal approximation \citep{park2}, the output of the RBF network
can approximate $\mathbf{d}_s$ to an arbitrary degree of accuracy $\mathbf{\delta}=\mathbf{\hat{d}}^*-\mathbf{d}_s$,
where $\mathbf{\hat{d}}^*$ is the output related to set of optimal weights.
\label{re:fuzzy}
\end{remark}

Finally, Theorem~\ref{th:lyap2} shows that the smooth inteligent controller, Eq.~(\ref{eq:issmc}), ensures the convergence 
of the tracking error to the invariant set defined by the boundary layers.

\begin{theorem}
\label{th:lyap2}
Consider the uncertain underactuated mechanical system (\ref{eq:mov2}) subject to Assumptions \ref{as:limf}--\ref{as:traj}. 
Then, the controller defined by (\ref{eq:issmc}) and (\ref{eq:rbf}) ensures the convergence of the tracking errors to the 
manifold $\mathbf{\varPhi}=\{\mathbf{\tilde{q}}\in\mathbb{R}^n\:\big|\:|s_i|\le\phi_i, i=1,\ldots,m\}$.
\end{theorem}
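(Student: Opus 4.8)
The plan is to use the standard sliding-mode argument: show that outside the boundary layer the candidate $V=\tfrac12\mathbf{s}^{\mathrm T}\mathbf{s}$ (Eq.~\eqref{eq:lyap1}) is strictly decreasing, so the set $\mathbf{\varPhi}$ is attractive and invariant. First I would compute $\dot s_i$ along the closed-loop dynamics. Differentiating Eq.~\eqref{eq:surf1} and using the acceleration expressions \eqref{eq:accel1}--\eqref{eq:accel2} together with the definitions of $\mathbf{M}_s$ and $\mathbf{f}_s$, one gets $\mathbf{\dot s}=\mathbf{M}_s\mathbf{u}+\mathbf{f}_s+\mathbf{\dot s}_r$. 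Substituting the control law \eqref{eq:issmc}, and writing $\mathbf{d}_s=\mathbf{\hat f}_s-\mathbf{f}_s$ for the lumped uncertainty, the estimate terms collapse and I expect to arrive at
\begin{equation}
\mathbf{\dot s}=-\mathbf{d}_s+\mathbf{\hat d}-\mathbf{\kappa}\,\sat(\mathbf{\phi}^{-1}\mathbf{s}),
\label{eq:sdot}
\end{equation}
componentwise $\dot s_i=-d_{s,i}+\hat d_i-\kappa_i\,\sat(s_i/\phi_i)$. (If $\mathbf{\hat M}_s\neq\mathbf{M}_s$ exactly, the mismatch can be folded into $\mathbf{d}_s$ and Assumption~\ref{as:limf} restated on the lumped term; I would note this but keep the notation as above.)

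Next I would look only at the region $|s_i|\ge\phi_i$, where $\sat(s_i/\phi_i)=\sgn(s_i)$. Then $\tfrac{d}{dt}\big(\tfrac12 s_i^2\big)=s_i\dot s_i=s_i(\hat d_i-d_{s,i})-\kappa_i|s_i|$. By Remark~\ref{re:fuzzy} the network output satisfies $\hat d_i=d_{s,i}+\delta_i$ up to the approximation error $\delta_i$, and by Assumption~\ref{as:limf} the residual $\hat d_i - d_{s,i}$ is bounded; choosing the control gain as $\kappa_i\ge F_{s,i}+\eta_i$ (equivalently $\kappa_i\ge|\delta_i|+\eta_i$ under the universal-approximation viewpoint), for some $\eta_i>0$, gives $s_i\dot s_i\le -\eta_i|s_i|<0$. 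Hence each $|s_i|$ strictly decreases whenever it exceeds $\phi_i$, which shows the boundary layer $\{|s_i|\le\phi_i\}$ is reached in finite time and is positively invariant; this is exactly convergence of the tracking errors to the manifold $\mathbf{\varPhi}$. Finally I would remark that once $\mathbf{s}$ is confined to $\mathbf{\varPhi}$, the definition of $\mathbf{s}$ in Eq.~\eqref{eq:surf1} as a Hurwitz combination of $\mathbf{\tilde q}_a,\mathbf{\tilde q}_u$ and their derivatives bounds $\mathbf{\tilde q}$ in a neighborhood whose size is set by $\phi_i$ and the gains $\alpha_a,\alpha_u,\lambda_a,\lambda_u$.

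The main obstacle I anticipate is the algebra of Step~1 — verifying that the $\mathbf{\hat M}_s^{-1}$ factor in \eqref{eq:issmc} cancels cleanly against $\mathbf{M}_s$ coming from the coupled accelerations \eqref{eq:accel1}--\eqref{eq:accel2}, and tracking the Schur-complement terms ($\mathbf{M}_{aa}',\mathbf{M}_{uu}',\mathbf{f}_a',\mathbf{f}_u'$) so that the surviving terms really are just $-\mathbf{d}_s+\mathbf{\hat d}-\mathbf{\kappa}\,\sat(\cdot)$. The Lyapunov half of the proof is routine once \eqref{eq:sdot} is in hand; the only subtlety there is making the gain condition on $\kappa_i$ explicit in terms of the bound $\mathbf{F}_s$ of Assumption~\ref{as:limf} and the RBF approximation error $\boldsymbol\delta$ of Remark~\ref{re:fuzzy}, and being careful that the argument is only claimed \emph{outside} the boundary layer — inside $\mathbf{\varPhi}$ no decrease is asserted, consistent with the "tracking with guaranteed precision" statement.
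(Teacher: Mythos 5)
Your proposal is correct and follows essentially the same route as the paper: both compute $\mathbf{\dot{s}}=\mathbf{f}_s+\mathbf{\dot{s}}_r+\mathbf{M}_s\mathbf{u}$ from (\ref{eq:accel1})--(\ref{eq:accel2}), substitute the control law (\ref{eq:issmc}) so that the estimate terms cancel, and use the gain $\mathbf{\kappa}$ to dominate the residual $\mathbf{\hat{d}}-\mathbf{d}_s$ outside the boundary layer, arriving at a bound of the form $\dot{V}\le-\eta\Vert\cdot\Vert$. The only substantive difference is one of packaging: the paper folds your outside-the-layer case analysis into a single Lyapunov candidate $V=\frac{1}{2}\mathbf{s}_\phi^\top\mathbf{s}_\phi$ with $\mathbf{s}_\phi=\mathbf{s}-\mathbf{\phi}\sat(\mathbf{\phi}^{-1}\mathbf{s})$, which vanishes identically inside $\mathbf{\varPhi}$ and satisfies $\mathbf{\dot{s}}_\phi=\mathbf{\dot{s}}$ outside it, whereas you work directly with $\frac{1}{2}\mathbf{s}^\top\mathbf{s}$ restricted to $|s_i|\ge\phi_i$; your explicit gain condition $\kappa_i\ge F_{s,i}+\eta_i$ and the remark that invariance plus finite-time reaching is what is actually being claimed are useful points the paper leaves implicit.
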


\begin{proof}

Let a positive-definite Lyapunov function candidate $V_2$ be defined as

\begin{equation}
V(t)=\frac{1}{2}\mathbf{s}_\phi^\top\mathbf{s}_\phi\:,
\label{eq:lyap2}
\end{equation}

\noindent
where each component of $\mathbf{s}_\phi(\mathbf{\tilde{q}})$ is a measure of the distance between $s_i$ and its related 
boundary layer, and is computed as follows:

\begin{equation}
\mathbf{s}_\phi(\mathbf{\tilde{q}})=\mathbf{s}-\mathbf{\phi}\sat(\mathbf{\phi}^{-1}\mathbf{s})\:.
\label{eq:sphi}
\end{equation}

\noindent
Noting that $\mathbf{\dot{s}}_\phi=\mathbf{s}_\phi=\mathbf{0}$ inside $\mathbf{\varPhi}$, and $\mathbf{\dot{s}}_\phi=\mathbf{\dot{s}}$
outside of it, then the time derivative of $V_2$ becomes:

\begin{equation}
\begin{split}
\dot{V}(t)&=\mathbf{s}_\phi^\top\mathbf{\dot{s}}
=\mathbf{s}_\phi^\top\big(\mathbf{\alpha}_a\mathbf{\ddot{q}}_a+\mathbf{\alpha}_u\mathbf{\ddot{q}}_u+\mathbf{\dot{s}}_r\big)\\
&=\mathbf{s}_\phi^\top\big[\mathbf{\alpha}_a\mathbf{M}_{aa}^{'-1}(\mathbf{f}_a^{'}+\mathbf{u}+\mathbf{d}_a^{'})\\
&\quad+\mathbf{\alpha}_u\mathbf{M}_{uu}^{'-1}(\mathbf{f}_u^{'}-\mathbf{M}_{au}^\top\mathbf{M}_{aa}^{-1}\mathbf{u}+\mathbf{d}_u^{'})
+\mathbf{\dot{s}}_r\big]\\
&=\mathbf{s}_\phi^\top\big[\mathbf{f}_s+\mathbf{\dot{s}}_r+\mathbf{M}_s\mathbf{u}\big].
\end{split}
\label{eq:lyap2da}
\end{equation}

\noindent
Applying the control law (\ref{eq:issmc}) to (\ref{eq:lyap2da}), and noting that $\sat(\mathbf{\phi}^{-1}\mathbf{s})=
\sgn(\mathbf{s}_\phi)$ outside $\mathbf{\varPhi}$ and $|\hat{d}_i-\hat{d}_i^*|\le|\hat{d}_i|$, one obtains

\begin{equation}
\dot{V}(t)=-\mathbf{s}_\phi^\top\big[\mathbf{\hat{d}}-\mathbf{\hat{d}}^*+\mathbf{\delta}+\mathbf{\kappa}\sgn(\mathbf{s}_\phi)\big]
\le-\eta\,\Vert\mathbf{s}_\phi\Vert_1\:,
\label{eq:lyap2db}
\end{equation}

\noindent
which implies that $\mathbf{s}_\phi\to\mathbf{0}$ and $\mathbf{\tilde{q}}\to\mathbf{\varPhi}$ as 
$t\to\infty$.
\end{proof}

\section*{ILLUSTRATIVE EXAMPLE: STABILIZING THE CART-POLE UNDERACTUATED SYSTEM}

The cart-pole system is composed by a small car with an inverted pendulum on it (see Fig.~\ref{fig:cart-pole}), and the 
related equations of motion are presented as

\begin{equation}
\left[ \begin{array}{cc}
m_c+m		&ml\cos\theta\\
ml\cos\theta	&ml^2 \end{array} \right]
\left[ \begin{array}{c} \ddot{x}\\ \ddot{\theta} \end{array} \right]=
\left[ \begin{array}{c} ml\dot{\theta}^2\sin\theta\\ mgl\sin\theta \end{array} \right]+
\left[ \begin{array}{c} u\\ 0 \end{array} \right].
\label{eq:cart-pole}
\end{equation}

Here, $x$ and $\theta$ are, respectively, the position of the cart and the angular displacement of the pendulum, $m_c$ is the 
mass of the cart, $m$ and $l$ represent the concentrated mass and the length of the pendulum, and $g$ is the acceleration due 
to gravity. 

\begin{figure}[htb]
\centering
\begin{tikzpicture}[scale=1.1]
\path [shade] (-1.8,0) rectangle (3.2,-0.5);
\draw [ultra thick] (0,0) rectangle (1.4,0.8);
\draw [fill] (1.8,3.5) circle [radius=0.3];
\draw [ultra thick] (0.7,0.8) -- (1.8,3.5);
\draw [dashed] (0.7,0.8) -- (0.7,3.5);
\draw [->,thin] (0.7,2.3) to [out=0,in=157] (1.25,2.2);
\draw [->,thin] (1.5,0.4) to (2.5,0.4);
\draw [->,very thick] (-1.2,0.4) to (-0.1,0.4);
\draw [->,very thick] (-0.6,3.5) to (-0.6,2.5);
\node at (-1.4,0.4) {$u$};
\node at (0.7,0.4) {$m_c$};
\node at (2.7,0.4) {$x$};
\node at (1.05,2.5) {$\theta$};
\node at (1.3,1.8) {$l$};
\node at (1.3,1.8) {$l$};
\node at (-0.6,2.3) {$g$};
\node [white] at (1.8,3.5) {$m$};
\end{tikzpicture}
\caption{Cart-pole system.}
\label{fig:cart-pole}
\end{figure}
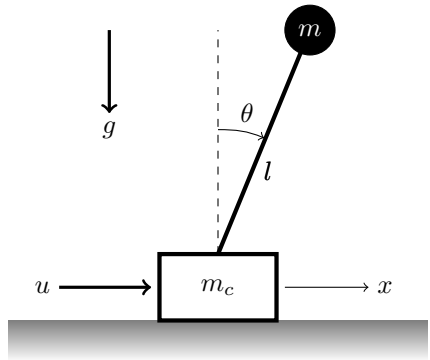

Due to Coulomb friction, a dead-zone is also considered in the cart-pole model:

\begin{equation}
u = \left\{\begin{array}{cl}
\nu+0.01&\mbox{if}\quad \nu\le0.01, \\
0&\mbox{if}\quad -0.01<\nu<0.01, \\
\nu-0.01&\mbox{if}\quad \nu\ge0.01 
\end{array}\right.
\label{eq:dzone}
\end{equation}

\noindent
where $\nu$ is the control action determined by the control law and $u$ is the actual control input.

It should be highlighted that, since only the position of the cart $x$ can be directly controlled, the angular displacement 
$\theta$ is considered an unactuated variable. On this basis, \citet{ashrafiuon2008} defined the switching variable as a 
linear combination of both actuated and unactuated state errors, $s = \alpha_a\dot{\tilde{x}} + \lambda_a\tilde{x} + \alpha_u 
l\dot{\tilde{\theta}} + \lambda_u l\tilde{\theta}$, with $\tilde{x}=x-x_d$ and $\tilde{\theta}=\theta-\theta_d$ as well as 
their derivatives representing the state errors. But considering that large uncertainties and external disturbances may 
occur, we suggest the inclusion of a compensation term $\hat{d}$ in the control law proposed in~\citep{ashrafiuon2008}, in 
order to compensate for modeling inaccuracies and to improve the control performance:

\begin{equation}
\nu=-\frac{(m_c+m\sin^2\theta)l}{\alpha_a l-\alpha_u\cos\theta}\left[\frac{(\alpha_a l-\alpha_u\cos\theta)ml\dot{\theta}^2\sin\theta
-[\alpha_a lm\cos\theta-\alpha_u(mc+m)]g\sin\theta}{(m_c+m\sin^2\theta)l}+\dot{s}_r+\hat{d}+\kappa\sat\left(\frac{s}{\phi}\right)\right]
\label{eq:control}
\end{equation}

\noindent
where $\kappa$ is the control gain, $\phi$ defines the width of the boundary layer, and $\dot{s}_r=-\alpha_a\ddot{x}_d-\alpha_u 
l\ddot{\theta}_d+\lambda_a\dot{\tilde{x}}+\lambda_u l\dot{\tilde{\theta}}$.

Considering the cart-pole system, Eq.~(\ref{eq:cart-pole}), with a dead-zone input, Eq.~(\ref{eq:dzone}), and the intelligent 
controller presented in Eq.~(\ref{eq:control}), numerical simulations were carried out to evaluate the efficacy of the proposed 
scheme. The simulation studies were performed with sampling rates of 200\,Hz for control system and 1\,kHz for dynamic model. 
The differential equations of the dynamic model were numerically solved with the fourth order Runge-Kutta method. 

In order to demonstrate the robustness of the intelligent controller against modeling imprecisions, the dead-zone input is not
considered in the development of the control law, and uncertainties of 30\% over the cart and pendulum masses are taken into 
account. On this basis, the following control parameters are considered: $m_c=0.4$\,kg, $m=0.14$\,kg, $l=0.215$\,m, 
$\alpha_a=0.02$, $\alpha_u=1$, $\lambda_a=0.005$, and $\lambda_u=2.5$.

Now, three different numerical simulations for the simultaneous stabilization of the cart, as well as the inverted pendulum at 
its unstable equilibrium point, are presented. The system is considered to be at rest but with an initial displacement in the 
pole angle: $\theta=-40^\circ$. 

First, the conventional sliding mode controller, $\hat{d}=0$, with a discontinuous term, $\kappa\sgn(s)$, is used to stabilize 
the cart-pole system. The obtained results are presented in Figs.~\ref{fig:sim1a} and~\ref{fig:sim1b}.

As observed in Fig.~\ref{fig:sim1a}, even with parametric uncertainties and an unknown dead-zone input, the conventional sliding
mode controller is able to stabilize both the cart and the pendulum in its unstable vertical position. However, Fig.~\ref{fig:sim1b}
shows that the adoption of a discontinuous term in the control law leads to the undesirable chattering effect. These high-frequency
oscillations in the controlled variable must be avoided in mechanical systems, since it can excite unmodeled vibration modes.

Therefore, retaining the compensation term as zero, $\hat{d}=0$, the discontinuous term is now substituted by a smooth interpolation 
inside the boundary layer: $\kappa\sat(s/\phi)$. Figures.~\ref{fig:sim2a}, \ref{fig:sim2b}, \ref{fig:sim2c} and~\ref{fig:sim2d}
show the obtained results.

Figure~\ref{fig:sim2b} shows that the introduction of a boundary layer can completely eliminate chattering, but, unfortunately, 
a perfect stabilization is no more possible, as observed in Fig.~\ref{fig:sim2a}. The adoption of a saturation function leads 
to limit cycles in the state variables, as shown in Figs.~\ref{fig:sim2c} and~\ref{fig:sim2d}.

In order to improve the control performance, the neural network based compensation is finally taken into account in the last
simulation study. In the first ten seconds, the collected data is used to train the neural network. Thereafter, using the 
weights computed in the training phase, the RBF network, Eq.~(\ref{eq:rbf}), is turned on and starts to compensate for the
identified modeling imprecisions. Figures.~\ref{fig:sim3a}, \ref{fig:sim3b}, \ref{fig:sim3c} and~\ref{fig:sim3d} show the 
obtained results.

As observed in Figs.~\ref{fig:sim3a}, \ref{fig:sim3c} and~\ref{fig:sim3d}, the control performance is clearly enhanced. 
Figure~\ref{fig:sim3c}, for example, shows that the limit cycle almost disappears with the adoption of the neural
network based compensation scheme. As a matter of fact, the improved performance of the proposed scheme is due to its 
ability to recognize and compensate for modelling inaccuracies. This enhanced performance is possible even with the
introduction of a thin boundary layer, which can successfully eliminate chattering, as shown in Fig.~\ref{fig:sim3b}. 

\begin{figure}[htb]
\centering
\includegraphics[width=\textwidth]{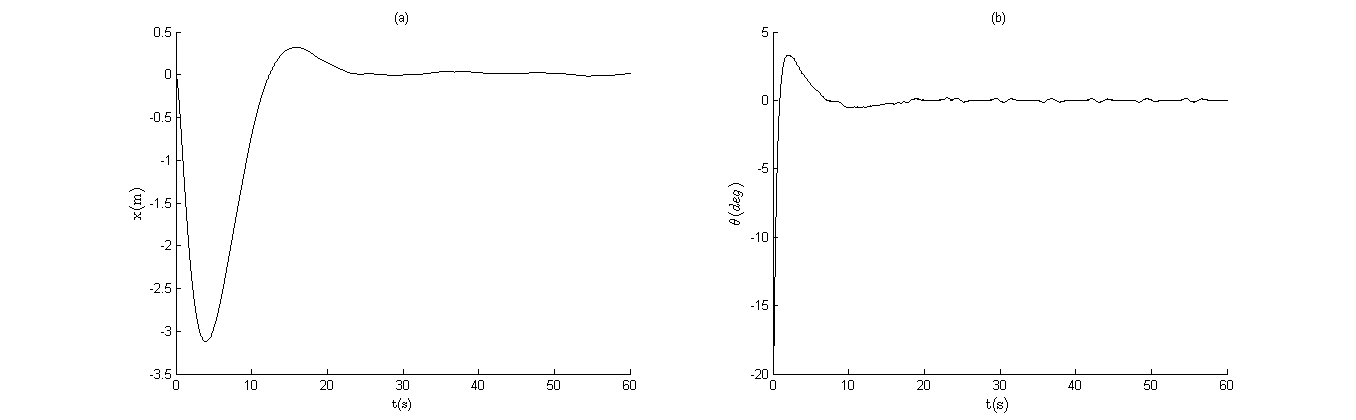}
\caption{Cart position (a) and pole angle (b) with the discontinuous sliding mode controller.}
\label{fig:sim1a}
\end{figure}

\begin{figure}[htb]
\centering
\includegraphics[width=0.5\textwidth]{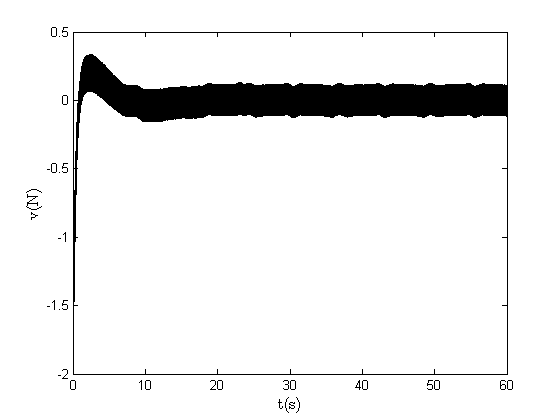}
\caption{Control action for the discontinuous sliding mode controller.}
\label{fig:sim1b}
\end{figure}

\begin{figure}[htb]
\centering
\includegraphics[width=\textwidth]{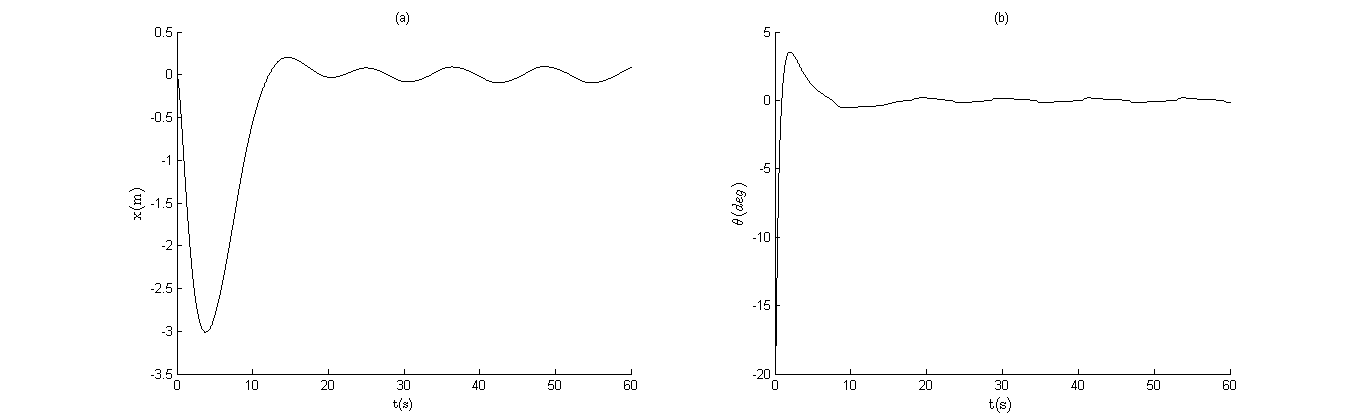}
\caption{Cart position (a) and pole angle (b) with the smoothed version of the sliding mode controller.}
\label{fig:sim2a}
\end{figure}

\begin{figure}[htb]
\centering
\includegraphics[width=0.47\textwidth]{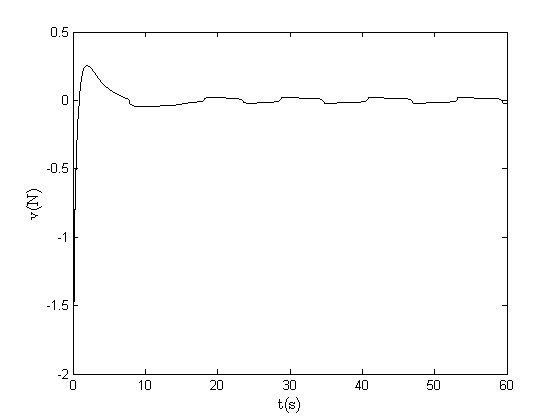}
\caption{Control action for the smoothed version of the sliding mode controller.}
\label{fig:sim2b}
\end{figure}

\begin{figure}[htb]
\begin{minipage}{0.5\textwidth}
\centering
\includegraphics[width=0.85\textwidth]{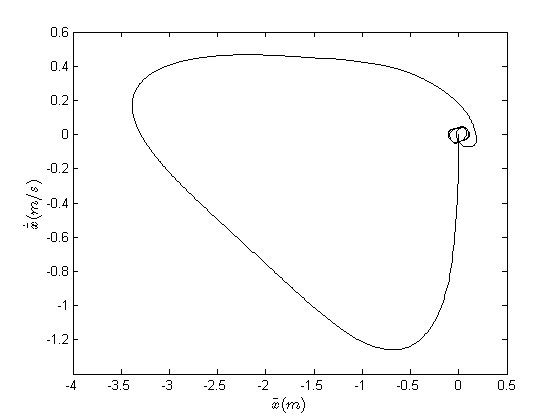}
\caption{Phase portrait for the cart variables with the smoothed version of the sliding mode controller.}
\label{fig:sim2c}
\end{minipage}
\begin{minipage}{0.5\textwidth}
\centering
\includegraphics[width=0.85\textwidth]{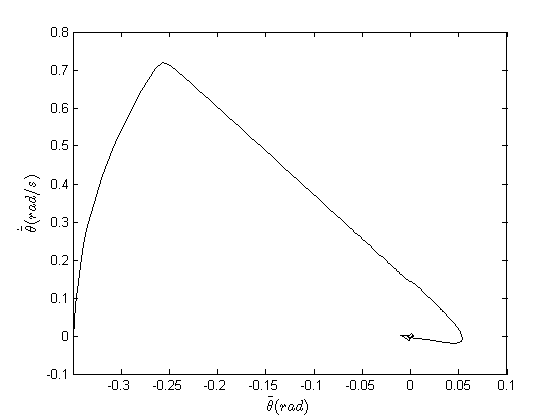}
\caption{Phase portrait for the pole variables with the smoothed version of the sliding mode controller.}
\label{fig:sim2d}
\end{minipage}\end{figure}

\begin{figure}[htb]
\centering
\includegraphics[width=\textwidth]{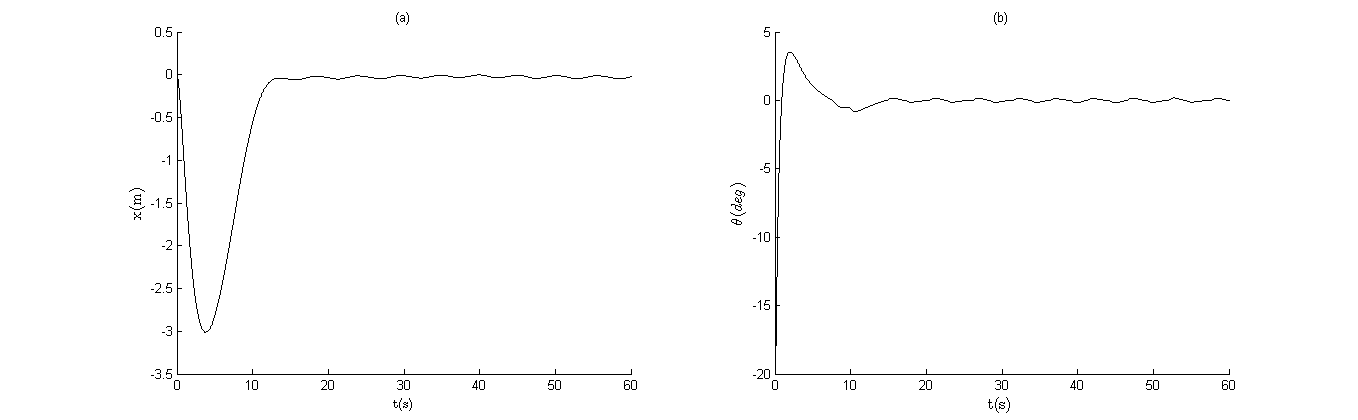}
\caption{Cart position (a) and pole angle (b) with the intelligent sliding mode controller.}
\label{fig:sim3a}
\end{figure}

\begin{figure}[htb]
\centering
\includegraphics[width=0.47\textwidth]{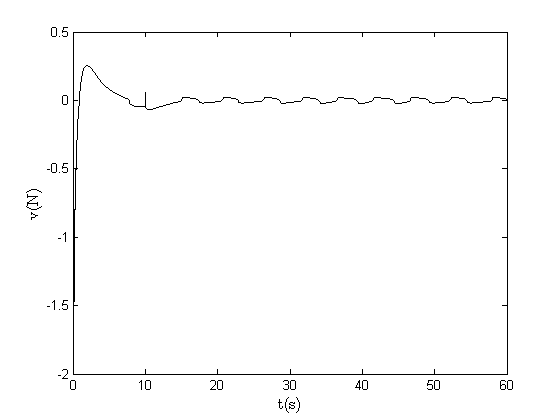}
\caption{Control action for the intelligent sliding mode controller.}
\label{fig:sim3b}
\end{figure}

\begin{figure}[htb]
\begin{minipage}{0.5\textwidth}
\centering
\includegraphics[width=0.85\textwidth]{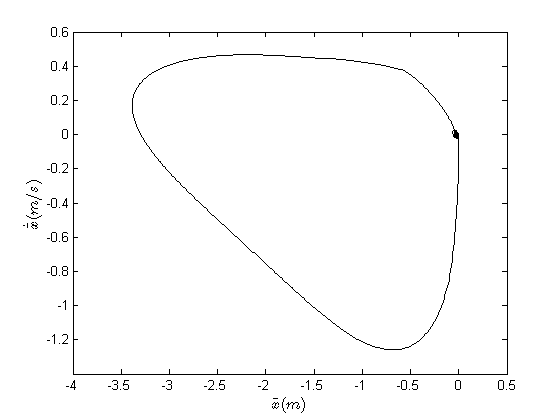}
\caption{Phase portrait for the cart variables with the intelligent sliding mode controller.}
\label{fig:sim3c}
\end{minipage}
\begin{minipage}{0.5\textwidth}
\centering
\includegraphics[width=0.85\textwidth]{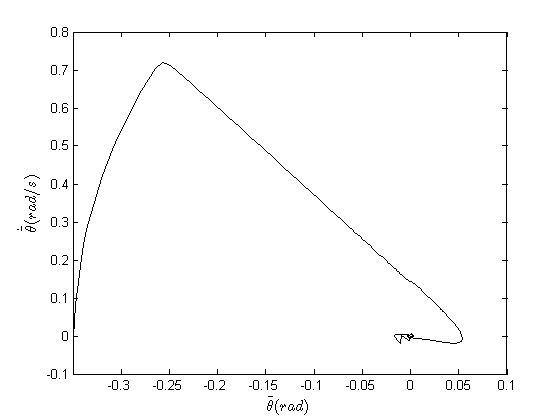}
\caption{Phase portrait for the pole variables with the intelligent sliding mode controller.}
\label{fig:sim3d}
\end{minipage}\end{figure}

\section*{CONCLUDING REMARKS}

The present work addressed the problem of controlling uncertain nonlinear underactuated systems with a sliding mode control 
approach, but enhanced with a RBF neural network. The convergence properties of the resulting intelligent controller are proved 
by means of a Lyapunov-like stability analysis. In order to illustrate the controller design method, the proposed scheme is 
applied to a cart-pole system. The control system performance is confirmed by means of numerical simulations. The adoption of 
a RBF network provides an smaller balancing error due to its ability to compensate the performance drop caused by the change 
of the sign function for the saturation function.

\section*{ACKNOWLEDGMENTS}

The authors would like to acknowledge the support of the Alexander von Humboldt Foundation, the Brazilian Coordination 
for the Improvement of Higher Education Personnel and the Brazilian National Research Council, the Brazilian National 
Agency of Petroleum, Natural Gas and Biofuels and the German Academic Exchange Service.

\end{document}